\documentclass[11pt]{article}       

\usepackage{graphicx}
%
%
\usepackage{amsfonts}
\usepackage[english]{babel}
\usepackage{amsmath}
\usepackage{amsthm}
\usepackage{amssymb}
\usepackage{multirow}
\usepackage{algorithm}
\usepackage[noend]{algpseudocode}
\newtheorem{thm}{Theorem}

\newtheorem{cor}[thm]{Corollary}
\newtheorem{rem}[thm]{Remark}

\begin{document}

\title{Nonexistence of a few binary orthogonal arrays
\thanks{The research of the first author was supported, in part, by a Bulgarian NSF contract I01/0003.
        The research of the second and third authors was supported, in part,
        by the Science Foundation of Sofia University under contract 144/2015
and its continuation ''Discrete, Algebraic and Combinatorial Structures'' in 2016.}
}

\author{Peter Boyvalenkov \\
Institute of Mathematics and Informatics, \\
Bulgarian Academy of Sciences, \\ 
8 G.Bonchev Street, 1113, Sofia, BULGARIA \\ [3pt]
and Faculty of Mathematics and Natural Sciences, \\ 
South-Western University, Blagoevgrad, Bulgaria. \\
email: {peter@math.bas.bg} \\
Tanya Marinova, Maya Stoyanova \\
Faculty of Mathematics and Informatics, Sofia University, \\
5 James Bourchier Blvd., 1164 Sofia, BULGARIA \\
email: {tanya.marinova@fmi.uni-sofia.bg} \\ 
email: {stoyanova@fmi.uni-sofia.bg} \\
}

\date{} 

\maketitle

\begin{abstract}
We develop and apply combinatorial algorithms for
investigation of the feasible distance distributions of binary orthogonal arrays
with respect to a point of the ambient binary Hamming space utilizing
constraints imposed from the relations between the distance distributions of
connected arrays. This turns out to be strong enough and we prove the nonexistence of binary orthogonal arrays of parameters
(length, cardinality, strength)$\ =(9,6.2^4=96,4)$, $(10,6.2^5,5)$, $(10,7.2^4=112,4)$,
$(11,7.2^5,5)$, $(11,7.2^4,4)$ and $(12,7.2^5,5)$, resolving the
first cases where the existence was undecided so far. For the existing
arrays our approach allows substantial reduction of the number of feasible distance distributions
which could be helpful for classification results (uniqueness, for example).

\textbf{Keywords.}{Binary Hamming space \and orthogonal arrays \and Krawtchouk polynomials \and distance distributions \and nonexistence}

\textbf{Subclass.}{MSC 05B15 \and 94B25 \and 94B65}
\end{abstract}

\section{Introduction}
\label{intro}

Orthogonal arrays have many connections to other combinatorial designs and have
applications in coding theory, the statistical design of experiments,
cryptography, various types of software testing and quality control.
We refer to the book \cite{HSS} as excellent exposition of the theory and
practice of orthogonal arrays. In fact, there are enormous material about orthogonal arrays in internet.

An orthogonal array (OA) of strength $\tau$ and index $\lambda$ in $H(n,2)$ (or binary orthogonal array, BOA),
consists of the rows of an $M\times n$ matrix $C$ with the property that every
$M \times \tau$ submatrix of $C$ contains all ordered $\tau$-tuples
of $H(\tau,2)$, each one exactly $\lambda=M/2^{\tau}$ times as rows.

Let $C \subset H(n, 2)$ be an $(n,M,\tau)$ BOA.
The distance distribution of $C$ with respect to $c \in H(n, 2)$ if the $(n+1)$-tuple
\[ w = w(c) = (w_0(c), w_1(c), \ldots, w_n(c)), \]
where $w_i(c) = |\{x \in C| d(x, c) = i\}|$, $i = 0, \ldots, n$.
All feasible distance distributions of BOA of parameters $(n,M,\tau)$ can be computed effectively for relatively small
$n$ and $\tau$ as shown in \cite{BK1}. Indeed, every distance distribution of $C$
satisfies the system
\begin{equation}
\label{eq in}
\sum_{i=0}^{n} w_{i}(c)\left(1-\frac {2i}{n}\right)^k=b_k |C|, \ \ \ k=0,1,\ldots,\tau,
\end{equation}
where $b_k=\frac{1}{2^n} \sum_{d=0}^n {n \choose d}\big( 1-\frac{2d}{n}\big)^k $
and, in particular, $b_k=0$ for $k$ odd.

The number $b_k$ is in fact the first coefficient in the
expansion of the polynomial $t^k$ in terms of (binary) Krawtchouk polynomials. The Krawtchouk
polynomials are zonal spherical functions for $H(n,2)$ (see \cite{DL,Lev95,Lev}) and can be the defined by
the three-term recurrence relation
\[ (n-k)Q_{k+1}^{(n)}(t)=ntQ_k^{(n)}(t)-kQ_{k-1}^{(n)}(t)\ \mbox{ for } 1 \leq k \leq n-1, \]
with initial conditions $Q_0^{(n)}(t)=1$ and $Q_1^{(n)}(t)=t$.

Let $n$, $M$ and $\tau \leq n$ be fixed. We denote by $P(n, M, \tau)$ the set of all possible
distance distributions of a $(n,M,\tau)$ BOA with respect to internal point $c$
(in the beginning -- all admissible solutions of the system \eqref{eq in} with $w_0(c) \geq 1$)
and by $Q(n, M, \tau)$ the set of all possible distance distributions of a $(n,M,\tau)$ BOA
with respect to external point (in the beginning -- all admissible solutions of the system \eqref{eq in} with $w_0(c)=0$).
Denote also $W(n,M,\tau) = P(n,M,\tau) \cup Q(n,M,\tau)$.

In this paper we describe an algorithm which works on the sets $P(n,M,\tau)$, $Q(n,M,\tau)$ and $W(n,M,\tau)$
utilizing connections between related BOAs. During the implementation of our algorithm these sets
are changed\footnote{However, we prefer to keep the initial notation
in order to avoid tedious notation.} by ruling out some distance distributions.

In Section \ref{sec:1} we prove several assertions which connect the distance
distributions of arrays under consideration and their relatives.
This imposes significant constraints on the targeted BOAs and therefore allows us to
collect rules for removing distance distributions from the sets
$P(n,M,\tau)$, $Q(n,M,\tau)$ and $W(n,M,\tau)$.
The logic of our algorithm is described in Section \ref{sec:2}. The new nonexistence results are described in
Section \ref{sec:3}.

Algorithms for dealing with distance distributions were proposed earlier in \cite{BK1} and \cite{BKMS}
but in these papers the set $P(n,M,\tau)$ was only examined. Moreover, two seemingly crucial observations
(Theorem \ref{t-complementary} together with Corollary \ref{complementary} and
Theorem \ref{a6} together with Corollary \ref{a6}) are new. Also, all complete versions
(for the set $W(n,M,\tau)$) of the remaining assertions from the next section are new.

\section{Relations between distance distributions of $(n,M,\tau)$ BOA and its derived BOAs}
\label{sec:1}

We start with a simple observation.

\begin{thm}
\label{t-complementary}
If the distance distribution $w=(w_0,w_1,\ldots,w_n)$ belongs to the set $W(n,M,\tau)$,
then the distance distribution $\overline{w}=(w_n,w_{n-1},\ldots,w_0)$ also belongs to $W(n,M,\tau)$.
\end{thm}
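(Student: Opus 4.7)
The plan is to use the global complementation map $\sigma: H(n,2) \to H(n,2)$ defined by $\sigma(x) = \mathbf{1} + x$, where $\mathbf{1}$ is the all-ones vector. This map is an isometry of $H(n,2)$ and, crucially, sends every binary $\tau$-tuple $(a_1,\ldots,a_\tau)$ to its bitwise complement $(1-a_1,\ldots,1-a_\tau)$. So if $C$ is a witness that $w \in W(n,M,\tau)$, I will produce a companion array $\overline{C} = \sigma(C) = \{\mathbf{1}+x : x \in C\}$ and show that its distance distribution with respect to the same reference point $c$ is exactly the reversal $\overline{w}$.

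First I would verify that $\overline{C}$ is again an $(n,M,\tau)$ BOA. Cardinality is preserved because $\sigma$ is a bijection. For the strength condition, fix any $\tau$ columns and any target tuple $(a_1,\ldots,a_\tau) \in H(\tau,2)$; the rows of $\overline{C}$ realizing this tuple on the chosen columns are exactly the $\sigma$-images of rows of $C$ realizing $(1-a_1,\ldots,1-a_\tau)$ on the same columns. Since $C$ has strength $\tau$, the latter count is $\lambda = M/2^\tau$, and this value is independent of $(a_1,\ldots,a_\tau)$, so $\overline{C}$ also has strength $\tau$ with the same index.

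Next I would compute the distance distribution of $\overline{C}$ with respect to the fixed point $c \in H(n,2)$ used to define $w$. For any $x \in C$ and any coordinate $j$, one has $c_j \ne (\mathbf{1}+x)_j$ iff $c_j = x_j$, whence $d(c,\mathbf{1}+x) = n - d(c,x)$. Therefore
\[
|\{y \in \overline{C} : d(c,y) = i\}| = |\{x \in C : d(c,x) = n-i\}| = w_{n-i}(c),
\]
so the distance distribution of $\overline{C}$ with respect to $c$ is precisely $\overline{w} = (w_n, w_{n-1}, \ldots, w_0)$.

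Finally, I would note why the statement is framed in terms of the union $W(n,M,\tau) = P(n,M,\tau) \cup Q(n,M,\tau)$ rather than $P$ and $Q$ separately: although an internal point $c \in C$ need not satisfy $c \in \overline{C}$ (this requires $\mathbf{1}+c \in C$), the point $c$ is in any case either internal or external to $\overline{C}$, so $\overline{w} \in P(n,M,\tau) \cup Q(n,M,\tau) = W(n,M,\tau)$. There is no substantial obstacle here; the only subtlety worth mentioning is precisely this internal/external bookkeeping, which is exactly what forces the statement to be phrased for $W$ rather than for $P$ and $Q$ individually.
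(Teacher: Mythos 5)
Your proposal is correct and follows essentially the same route as the paper: complement every entry of $C$ to form $\overline{C}$, observe it is again an $(n,M,\tau)$ BOA, and note that distances to the fixed reference point are sent from $i$ to $n-i$, reversing the distance distribution. Your verification of the strength condition (matching each target $\tau$-tuple with its bitwise complement in $C$) is in fact slightly more careful than the paper's one-line justification, but the argument is the same.
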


\begin{proof}
Let $C \subset H(n, 2)$ be a BOA of parameters $(n, M, \tau)$ and $\overline{C}$ is the array which
is obtained from $C$ by the permutation $(0 \to 1, 1 \to 0)$ in the whole $C$.
Since the distances inside $C$ are preserved by this transformation, $\overline{C}$ is again
$(n,M,\tau)$ BOA. On the other hand, distance $i$ from external for $C$ point to a point of $C$
correspond to distance $n-i$ to the transformed point of $\overline{C}$. This means that if
$w=(w_0,w_1,\ldots,w_n)$ is the distance distribution of $C$ with respect to some point $c \in H(n,2)$
(internal or external for $C$), then the distance distribution of $\overline{C}$ with respect to the same point
(which can become either internal or external for $\overline{C}$, depending on whether $w_n>0$ or $w_n=0$)
is $\overline{w}=(w_n,w_{n-1},\ldots,w_0)$. \hfill $\Box$
\end{proof}

\begin{cor}
\label{complementary}
The distance distribution $w=(w_0,w_1,\ldots,w_n) \in W(n, M, \tau)$ is ruled out if
$\overline{w}=(w_n,w_{n-1},\ldots,w_0) \not\in W(n, M, \tau)$.
\end{cor}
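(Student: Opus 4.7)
The plan is to observe that this corollary is essentially the contrapositive of Theorem \ref{t-complementary} once we interpret what ``ruled out'' means. Recall that the sets $P(n,M,\tau)$, $Q(n,M,\tau)$, and $W(n,M,\tau)$ shrink during the algorithm: a tuple is ``ruled out'' precisely when it cannot actually be realized as a distance distribution of some $(n,M,\tau)$ BOA with respect to any point $c \in H(n,2)$.

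First I would spell out this interpretation. Suppose, toward the contrapositive, that $w=(w_0,\ldots,w_n)$ is genuinely realizable, i.e., there exists an $(n,M,\tau)$ BOA $C\subset H(n,2)$ and a point $c\in H(n,2)$ with $w_i(c)=w_i$ for all $i$. Then Theorem \ref{t-complementary} applies directly: the complementary array $\overline{C}$ is again an $(n,M,\tau)$ BOA, and its distance distribution with respect to the same point $c$ is exactly $\overline{w}=(w_n,w_{n-1},\ldots,w_0)$. Hence $\overline{w}$ must lie in $W(n,M,\tau)$, since $W(n,M,\tau)$ is defined to contain all realizable distance distributions (internal or external).

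Contrapositively, if $\overline{w}\notin W(n,M,\tau)$, then no such realization of $w$ can exist, and $w$ must be removed from the current working set $W(n,M,\tau)$. A small subtlety worth noting is that realizability with respect to $c$ may be internal for $C$ but external for $\overline{C}$, or vice versa (depending on whether $w_0>0$ or $w_n>0$); this is the reason the statement is phrased for the union $W(n,M,\tau)=P(n,M,\tau)\cup Q(n,M,\tau)$ rather than for $P$ or $Q$ individually.

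There is no real obstacle here — the only thing to be careful about is keeping the logical direction straight (we rule out $w$ based on the absence of $\overline{w}$, not the other way around) and making sure the argument respects the mixing of internal and external cases inside $W(n,M,\tau)$. Once these bookkeeping points are addressed, the corollary is an immediate consequence of Theorem \ref{t-complementary}.
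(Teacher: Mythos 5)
Your proposal is correct and matches the paper's intent: the corollary is stated without a separate proof precisely because it is the contrapositive of Theorem \ref{t-complementary}, exactly as you spell out. Your remarks on the internal/external bookkeeping and on why the statement is phrased for $W = P \cup Q$ are accurate and consistent with the paper's discussion.
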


Corollary \ref{complementary} is important in all stages of our algorithm since it
requires the non-symmetric distance distributions to be paired off and infeasibility
of one element of the pair immediately implies the infeasibility for the other.

We proceed with analyzing relations
between the BOA $C$ and BOAs $C^\prime$ of parameters $(n-1,M,\tau)$ which are obtained
from $C$ by deletion of one of its columns. Of course, the set $W(n-1,M,\tau)$ of possible
distance distributions of $C^{\prime}$ is sieved by Corollary \ref{complementary} as well.

It is convenient to fix the removing of the first column of $C$. Let the
distance distribution of $C$ with respect to $c = \mathbf{0}=(0,0,\ldots,0) \in H(n,2)$ be
$w = (w_0,w_1,\ldots,w_n) \in W(n,M,\tau)$ and the distance distribution of $C$ with respect
to $c^\prime =(0,0,\ldots,0) \in H(n-1,2)$ be $w^\prime=(w^\prime_0,w^\prime_1,\ldots,w^\prime_{n-1}) \in W(n-1,M,\tau)$.

For every $i \in \{0, 1, \ldots, n\}$ the matrix which consists of the rows of $C$ of weight $i$ is called $i$-block.
It follows from the above notations that the cardinality of the $i$-block is $w_i$.
Next we denote by $x_i$ ($y_i$, respectively) the number of the ones (zeros, respectively) in the
intersection of the first column of $C$ and the rows of the $i$-block.

\begin{thm}\label{th 3.1.2}
The numbers $x_i$ and $y_i$, $i=0,1,\ldots,n$, satisfy the following system of linear equations
\begin{equation}\label{sys1_int.}
\left|
\begin{array}{l}
x_i+y_i=w_i, \ i=1,2,\ldots,n-1 \\
x_{i+1}+y_i=w^{\prime}_i, \ i=0,1,\ldots,n-1 \\
y_0 = w_0 \\
x_n = w_n \\
x_i, y_i \in \mathbb{Z}, \ x_i \geq 0, \ y_i \geq 0, \ i=0,1,\ldots,n
\end{array}
\right. .
\end{equation}
\end{thm}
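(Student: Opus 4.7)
The plan is to verify each equation of the system by direct counting, exploiting the fact that deleting the first column either preserves the Hamming weight of a row (if that entry was $0$) or decreases it by exactly one (if that entry was $1$).

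First I would verify the partition identity $x_i + y_i = w_i$. By definition the $i$-block of $C$ has exactly $w_i$ rows, and the first entry of any such row is either $0$ or $1$; these two options are precisely what $y_i$ and $x_i$ count, so the identity is immediate for every $i \in \{0,1,\ldots,n\}$. For the indices $1 \leq i \leq n-1$ this gives the first group of equations; the extremal cases $i=0$ and $i=n$ are handled by the boundary equations. Indeed, the $0$-block consists of the all-zero vector only (counted with multiplicity $w_0$), so the first entry of each of its rows is $0$, forcing $x_0=0$ and $y_0=w_0$; symmetrically, the $n$-block consists only of the all-one vector, yielding $y_n=0$ and $x_n=w_n$.

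Next I would establish the key connection $x_{i+1}+y_i=w^\prime_i$ between $C$ and $C^\prime$. After deleting the first column, a row of weight $i$ in $C^\prime$ can arise in exactly two disjoint ways: either it comes from a row of weight $i+1$ in $C$ whose first entry was $1$ (deletion drops the weight by one), contributing $x_{i+1}$ such rows, or it comes from a row of weight $i$ in $C$ whose first entry was $0$ (deletion preserves the weight), contributing $y_i$ such rows. Summing the two disjoint contributions and using the definition of $w^\prime_i$ as the number of rows of $C^\prime$ of weight $i$ gives the desired equation for every $i \in \{0,1,\ldots,n-1\}$.

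Finally, the non-negativity and integrality conditions $x_i, y_i \in \mathbb{Z}_{\geq 0}$ follow immediately because both quantities are defined as cardinalities of subsets of rows of $C$. I do not expect a real obstacle here; the statement is pure bookkeeping, and the only point that requires a little care is making sure that the boundary identities $y_0 = w_0$ and $x_n = w_n$ are not over-counted by also including them in the range $1 \leq i \leq n-1$ of the first line of the system, which is why that line is stated only for $i=1,\ldots,n-1$.
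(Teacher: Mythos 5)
Your proposal is correct and follows essentially the same argument as the paper: partitioning each $i$-block by the value of the first coordinate gives $x_i+y_i=w_i$ and the boundary identities, and the double counting of weight-$i$ rows of $C^\prime$ (coming either from weight-$i$ rows of $C$ with first entry $0$ or weight-$(i+1)$ rows with first entry $1$) gives $x_{i+1}+y_i=w^\prime_i$. The additional care you take with the extremal blocks and with integrality/nonnegativity only makes explicit what the paper leaves implicit.
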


\begin{proof}
The equalities $x_i+y_i=w_i$, $i= 1,\ldots,n-1$, $x_n=w_n$ and $y_0=w_0$ follow directly
from the definition of the numbers $x_i$ and $y_i$. The relations $x_{i+1}+y_i=w^{\prime}_i$,
$i=0,1,\ldots,n-1$, connecting $w$ and $w^\prime$, follow from the fact that the
rows of $C^\prime$, which are at distance $i$ from $c^{\prime}$, are obtained
in two ways: from the $y_i$ rows of $C$ at distance $i$ from $c$ and first coordinate $0$,
and from the $x_{i+1}$ rows of $C$ at distance $i+1$ from $c$ and first coordinate $1$. \hfill $\Box$
\end{proof}

\begin{cor}
\label{algo-a0}
The distance distribution $w=(w_0,w_1,\ldots,w_n) \in W(n, M, \tau)$ is ruled out
if no system (\ref{sys1_int.}) obtained when $w^\prime$ runs $W(n-1,M,\tau)$ has a solution.
\end{cor}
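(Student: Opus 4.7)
The plan is to derive Corollary \ref{algo-a0} as the direct contrapositive of Theorem \ref{th 3.1.2}. Suppose that $w=(w_0,w_1,\ldots,w_n) \in W(n,M,\tau)$ is actually realized, i.e.\ there exists a $(n,M,\tau)$ BOA $C \subset H(n,2)$ and a point $c \in H(n,2)$ (internal or external for $C$) such that $w$ is the distance distribution of $C$ with respect to $c$. By permuting coordinates simultaneously in $C$ and $c$, I may without loss of generality take $c = \mathbf{0}$.

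Next I would verify that after deleting the first column of $C$ the resulting array $C^\prime \subset H(n-1,2)$ is again a $(n-1,M,\tau)$ BOA. This is immediate: any $\tau$ columns of $C^\prime$ are also $\tau$ columns of $C$, so they contain each ordered $\tau$-tuple of $H(\tau,2)$ exactly $\lambda = M/2^\tau$ times. Hence the distance distribution $w^\prime$ of $C^\prime$ with respect to $c^\prime = \mathbf{0} \in H(n-1,2)$ belongs to $W(n-1,M,\tau)$. Now defining $x_i, y_i$ ($0 \le i \le n$) exactly as in the paragraph preceding Theorem \ref{th 3.1.2}, that theorem guarantees that these numbers yield a non-negative integer solution of system (\ref{sys1_int.}) associated with this particular $w^\prime$.

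Contrapositively: if every $w^\prime \in W(n-1,M,\tau)$ produces a system (\ref{sys1_int.}) with no admissible solution in non-negative integers, then $w$ cannot arise from any $(n,M,\tau)$ BOA, and must therefore be removed from $W(n,M,\tau)$. I do not anticipate any real obstacle; the corollary is essentially a reformulation of Theorem \ref{th 3.1.2} phrased so as to be directly usable as a sieving step in the algorithm of Section \ref{sec:2}. The only small points to check are the preservation of strength under column deletion (handled above) and the choice of the first column, which is without loss of generality because permuting columns of $C$ does not alter the OA property or the distances from $c = \mathbf{0}$.
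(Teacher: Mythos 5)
Your proposal is correct and follows the same (implicit) route as the paper, which offers no separate proof because the corollary is just the contrapositive of Theorem~\ref{th 3.1.2}: any realization of $w$ by a BOA $C$ yields, via column deletion, some $w^\prime \in W(n-1,M,\tau)$ whose system (\ref{sys1_int.}) is solved by the $x_i,y_i$. One small correction: the reduction to $c=\mathbf{0}$ is achieved not by permuting coordinates (which preserves the weight of $c$) but by translating $C$ by $c$, i.e.\ XOR-ing every row with $c$ --- an isometry of $H(n,2)$ that flips some columns and hence preserves the orthogonal array property.
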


\begin{rem}
\label{BK-theoremA}
Theorem \ref{th 3.1.2} was firstly proved and used in 2013 by Boyvalenkov-Kulina \cite{BK1} for $w \in P(n,M,\tau)$.
\end{rem}

Corollary \ref{algo-a0} rules out some distance distributions $w$ but it mainly serves to produce
feasible pairs $(w,w^\prime)$ which will be investigated further.

Our next step is based on the following property of BOAs: if we take the rows of $C$ with first coordinate
$0$ ($1$, respectively) and remove that first coordinate then we obtain a BOA $C_0$ ($C_1$, respectively)
of parameters $(n-1,M/2,\tau-1)$ (see Figure 1). At this stage the BOAs $C_0$ and $C_1$ have the same sets of
admissible distance distributions -- all these which have passed the sieves of Corollaries \ref{complementary}
and \ref{algo-a0} for the set $W(n-1,M/2,\tau-1)$.

\begin{center}
{\bf Figure 1.}
\label{construction A}
\smallskip

\begin{tabular}{c}
  $w^\prime=(w^{\prime}_0,w^{\prime}_1,\ldots,w^{\prime}_{n-1})$ \\
  $C^{\prime} = (n-1,M,\tau)$ \\
\begin{tabular}{|c|c|}
\hline
0 & \\
0 & $y=(y_0,y_1,\ldots,y_{n-1})$ \\
$\vdots$ & $C_0=(n-1,M/2,\tau-1)$ \\
0 & \\
\hline
1 & \\
1 & $x=(x_1,x_2,\ldots,x_n)$ \\
$\vdots$ & $C_1=(n-1,M/2,\tau-1)$ \\
1 & \\
\hline
\end{tabular} \\
$w=(w_0,w_1,\ldots,w_n)$ \\
$C=(n,M,\tau)$ \\
\end{tabular}
\end{center}

We continue with relations between the BOAs $C$, $C^{\prime}$, $C_0$ and $C_1$ using the numbers
$x_i$ and $y_i$, $i=0,1,\ldots,n$.

\begin{thm}
\label{a1}
The distance distribution of the $(n-1,M/2,\tau-1)$ BOA $C_0$ with respect to $c^{\prime}$ is $y=(y_0,y_1,\ldots,y_{n-1})$,
i.e. $y \in W(n-1,M/2,\tau-1)$.
\end{thm}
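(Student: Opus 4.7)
The plan is to directly unwind the definition of $y_i$ and match it coordinate-by-coordinate against the distance distribution of $C_0$ with respect to $c'$. The fact that $C_0$ is itself an $(n-1,M/2,\tau-1)$ BOA is the standard derived-array construction already recorded immediately before the statement (it suffices that the first column is held equal to $0$ and the remaining $n-1$ columns inherit the orthogonality of $C$ after dropping one coordinate from the strength), so the only new content is the identification of the distance distribution.

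First I would recall that $y_i$ was defined as the number of rows of $C$ belonging to the $i$-block (that is, at distance $i$ from $c=\mathbf{0}\in H(n,2)$) whose first coordinate equals $0$. These are precisely the rows whose truncations constitute $C_0$, and the truncation map "delete the first coordinate" is a bijection from this collection onto $C_0$. The key observation is now weight-preservation: erasing a $0$-coordinate from a binary vector does not change its Hamming weight. Hence a weight-$i$ row of $C$ with first coordinate $0$ becomes a weight-$i$ vector in $H(n-1,2)$, which is the same as a vector at distance $i$ from $c'=(0,\ldots,0)\in H(n-1,2)$. Conversely, every vector of $C_0$ at distance $i$ from $c'$ lifts uniquely to a weight-$i$ row of $C$ with first coordinate $0$, and so is counted by $y_i$.

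Combining the two, the $i$-th component of the distance distribution of $C_0$ with respect to $c'$ equals $y_i$ for every $i=0,1,\ldots,n-1$, i.e.\ this distance distribution is exactly the tuple $y=(y_0,y_1,\ldots,y_{n-1})$. Since $c'$ is internal or external to $C_0$ according to whether $y_0>0$, membership in $W(n-1,M/2,\tau-1)$ follows.

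There is no genuine obstacle here; the one point to handle with care is the contrast with the companion statement for $C_1$, whose rows start with $1$ and whose deletion therefore drops the weight by one. This shift is precisely why in system \eqref{sys1_int.} one sees $x_{i+1}+y_i=w_i'$ rather than $x_i+y_i=w_i'$, and I would make that contrast explicit so that the indexing used above for $C_0$ reads unambiguously.
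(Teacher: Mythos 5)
Your argument is correct and is exactly the paper's (one-line) proof, just spelled out: $y_i$ counts the rows of the $i$-block with first coordinate $0$, and deleting that $0$ preserves the weight, so these become precisely the points of $C_0$ at distance $i$ from $c'$. No issues.
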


\begin{proof}
The number $y_i$ is equal to the number of the points of $C_0$ at distance $i$ from the point $c^{\prime}$. \hfill $\Box$
\end{proof}

\begin{rem}
\label{BKMS-theorem1a}
We have two possibilities in Theorem \ref{a1} -- if $y_0 \geq 1$, then $c^{\prime} \in C_0$ and
therefore $y \in P(n-1,M/2,\tau-1)$ (this is Theorem 1a) in \cite{BKMS}), or $y_0 = 0$ when
$c^{\prime} \not\in C_0$ and therefore $y \in Q(n-1,M/2,\tau-1)$.
\end{rem}

\begin{cor}
\label{a2}
The pair $(w,w^\prime)$ is ruled out if $y \not\in W(n-1,M/2,\tau-1)$ or if
$\overline{y}=(y_{n-1},y_{n-2},\ldots,y_0) \not\in W(n-1,M/2,\tau-1)$.
\end{cor}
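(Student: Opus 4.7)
The plan is to derive this essentially as a packaging of Theorem~\ref{a1} with the complementary-array symmetry of Theorem~\ref{t-complementary}. First I would observe that once the pair $(w,w^\prime)$ is fixed, the system \eqref{sys1_int.} determines the vectors $x=(x_1,\ldots,x_n)$ and $y=(y_0,\ldots,y_{n-1})$ uniquely: the boundary condition $y_0=w_0$ together with the recurrence $x_{i+1}=w^\prime_i-y_i$ and $y_{i+1}=w_{i+1}-x_{i+1}$ propagates through all indices (with $x_n=w_n$ closing the system). So the vector $y$ referenced in the statement is unambiguously defined from $(w,w^\prime)$.

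Next I would invoke Theorem~\ref{a1}: if a BOA $C$ with parameters $(n,M,\tau)$ realizes the pair $(w,w^\prime)$, then the sub-array $C_0\subset H(n-1,2)$ formed by the rows of $C$ with leading coordinate $0$ (with that coordinate deleted) is an $(n-1,M/2,\tau-1)$ BOA whose distance distribution with respect to $c^\prime$ is exactly $y$. In particular, $y\in W(n-1,M/2,\tau-1)$. Contrapositively, if the uniquely determined $y$ is not in $W(n-1,M/2,\tau-1)$, no such $C$ can exist, so the pair $(w,w^\prime)$ is ruled out.

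For the second half, I would apply Corollary~\ref{complementary} to $C_0$: since $y$ is a feasible distance distribution for $(n-1,M/2,\tau-1)$ BOAs, so is its reversal $\overline{y}$ (obtained by passing from $C_0$ to its bit-complement $\overline{C_0}$). Hence $\overline{y}\in W(n-1,M/2,\tau-1)$ is likewise a necessary condition, and failure of this membership also rules out the pair.

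I do not expect any genuine obstacle: the corollary is essentially a two-line assembly of Theorem~\ref{a1} and Corollary~\ref{complementary}. The only subtle point to spell out is the uniqueness of $y$ given $(w,w^\prime)$, so that the statement ``$y\not\in W(n-1,M/2,\tau-1)$'' is well-defined without ambiguity over which solution of \eqref{sys1_int.} one picks.
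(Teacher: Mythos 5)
Your proposal is correct and matches the paper's own argument, which likewise derives the corollary directly from Theorem~\ref{a1} together with Corollary~\ref{complementary} applied to $\overline{C_0}$. The extra remark about the unique determination of $y$ from the pair $(w,w^\prime)$ via the system \eqref{sys1_int.} is a harmless (and valid) clarification that the paper leaves implicit.
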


\begin{proof}
Follows from Theorem \ref{a1} and Corollary \ref{complementary} for $\overline{C_0}$. \hfill $\Box$
\end{proof}

\begin{thm}
\label{a3}
The distance distribution of the $(n-1,M/2,\tau-1)$ BOA $C_1$ with respect to $c^{\prime}$ is $x=(x_1,x_2,\ldots,x_n)$,
i.e. $x \in W(n-1,M/2,\tau-1)$.
\end{thm}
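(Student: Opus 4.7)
The plan is to mimic the proof of Theorem \ref{a1}, but with the essential bookkeeping twist that the deleted first coordinate is $1$ rather than $0$. First I would recall the setup: by definition, $x_i$ counts the rows of the $i$-block of $C$ whose first coordinate equals $1$, and $C_1$ is formed precisely by taking these rows and deleting that first coordinate. Since the deleted bit is a $1$, a row of weight $i$ in $H(n,2)$ becomes a row of weight $i-1$ in $H(n-1,2)$, which is exactly its Hamming distance from $c^{\prime}=\mathbf{0}\in H(n-1,2)$.

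Next I would convert this observation into the claim of the theorem. Letting the distance distribution of $C_1$ with respect to $c^{\prime}$ be $(z_0,z_1,\ldots,z_{n-1})$, the identification above gives $z_{i-1}=x_i$ for $i=1,2,\ldots,n$; equivalently, the tuple $x=(x_1,x_2,\ldots,x_n)$ is nothing but a re-indexed version of the distance distribution of $C_1$. Note that $z_{n-1}=x_n=w_n$ by the last equation of system \eqref{sys1_int.}, which is consistent with the fact that the all-ones vector in $C$ maps to the all-zeros vector in $C_1$.

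Finally, the membership $x\in W(n-1,M/2,\tau-1)$ follows because $C_1$ was already observed (in the paragraph before Theorem \ref{a1} and in Figure~1) to be an $(n-1,M/2,\tau-1)$ BOA, so its distance distribution with respect to any point of $H(n-1,2)$ belongs to $W(n-1,M/2,\tau-1)$ by definition. As in Remark \ref{BKMS-theorem1a}, the dichotomy $x_1\geq 1$ versus $x_1=0$ decides whether $c^{\prime}\in C_1$, and hence whether $x$ lies in $P(n-1,M/2,\tau-1)$ or in $Q(n-1,M/2,\tau-1)$.

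There is no real obstacle: the content of the theorem is a direct counting identity, and the only thing that requires care is the index shift $i\mapsto i-1$ induced by removing a coordinate equal to $1$. Once this shift is correctly recorded, the argument is a verbatim analogue of the proof of Theorem \ref{a1}.
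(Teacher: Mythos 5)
Your proof is correct and takes essentially the same (one-line) approach as the paper: $x_i$ counts the points of $C_1$ at distance $i-1$ from $c^{\prime}$, since deleting the leading $1$ drops the weight by one. The only blemish is the aside claiming the all-ones vector of $C$ maps to the all-zeros vector of $C_1$ --- it maps to the all-ones vector of length $n-1$, at distance $n-1$ from $c^{\prime}$ --- but this does not affect the argument.
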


\begin{proof}
The number $x_i$ is equal to the number of the points of $C_1$ at distance $i-1$ from the point $c^{\prime}$. \hfill $\Box$
\end{proof}

\begin{rem}
\label{BKMS-theorem2a}
Similarly to above, we have two possibilities in Theorem \ref{a3} -- if $x_1 \geq 1$, then $c^{\prime} \in C_1$ and
therefore $x \in P(n-1,M/2,\tau-1)$ (this is Theorem 2a) in \cite{BKMS}), or $x_1=0$ when
$c^{\prime} \not\in C_1$ and therefore $x \in Q(n-1,M/2,\tau-1)$.
\end{rem}

\begin{cor}
\label{a4}
The pair $(w,w^\prime)$ is ruled out if $x \not\in W(n-1,M/2,\tau-1)$ or if
$\overline{x}=(x_n, x_{n-1}, \ldots, x_1) \not\in W(n-1,M/2,\tau-1)$.
\end{cor}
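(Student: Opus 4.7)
The plan is to observe that Corollary \ref{a4} is the exact analogue for $C_1$ of what Corollary \ref{a2} established for $C_0$, and that both halves drop out by combining one earlier result with the symmetry statement of Corollary \ref{complementary}.

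First I would invoke Theorem \ref{a3}: any realization of the pair $(w,w^\prime)$ by a $(n,M,\tau)$ BOA $C$ (with the fixed column removal) produces a $(n-1,M/2,\tau-1)$ BOA $C_1$ whose distance distribution with respect to $c^\prime$ is exactly $x=(x_1,x_2,\ldots,x_n)$. Hence membership $x\in W(n-1,M/2,\tau-1)$ is a necessary condition; contrapositively, if $x\notin W(n-1,M/2,\tau-1)$, no such $C_1$ can exist and the pair $(w,w^\prime)$ is ruled out.

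For the second clause I would apply Corollary \ref{complementary} to $C_1$: the bit-flip map sends $C_1$ to a BOA of the same parameters and sends its distance distribution $x$ to $\overline{x}=(x_n,x_{n-1},\ldots,x_1)$. Therefore $\overline{x}\in W(n-1,M/2,\tau-1)$ is also necessary, and failure of this membership again forces $(w,w^\prime)$ to be infeasible.

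There is no real obstacle here — the statement is a direct corollary of Theorem \ref{a3} and Corollary \ref{complementary}. The only point that needs care is the index bookkeeping: a row of $C$ of weight $i$ with leading coordinate $1$ contributes a weight-$(i-1)$ row to $C_1$, which is why the components $x_1,\ldots,x_n$ count distances $0,\ldots,n-1$ in $H(n-1,2)$ and the reverse of $x$ is $(x_n,x_{n-1},\ldots,x_1)$, matching the expression appearing in the statement.
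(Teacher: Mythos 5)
Your proposal is correct and follows exactly the paper's argument: the paper's proof of this corollary is the one-line ``Follows from Theorem \ref{a3} and Corollary \ref{complementary} for $\overline{C_1}$,'' which is precisely the two-step combination you spell out. The extra index bookkeeping you include is accurate and consistent with the statement of Theorem \ref{a3}.
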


\begin{proof}
Follows from Theorem \ref{a3} and Corollary \ref{complementary} for $\overline{C_1}$. \hfill $\Box$
\end{proof}

In our next step we consider the effect of the permutation $(0 \to 1, 1 \to 0)$ in the first column of
$C$. This transformation does not change the distances from $C$ and thus we obtain a BOA $C^{1,0}$ of
parameters $(n,M,\tau)$ again.

\begin{thm}
\label{a5}
If the distance distribution of $C$ with respect to $c=\mathbf{0} \in H(n,2)$ is
$w=(w_0,w_1,\ldots,w_{n-1},w_n)=(y_0,x_1+y_1,\ldots,x_{n-1}+y_{n-1},x_n)$, then
the distance distribution of $C^{1,0}$ with respect to $c$ is
$\widehat{w}=(x_1,x_2+y_0,\ldots,x_n+y_{n-2},y_{n-1})$, i.e. $\widehat{w} \in W(n,M,\tau)$.
\end{thm}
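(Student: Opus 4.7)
My plan is to verify two things: first, that $C^{1,0}$ is indeed an $(n,M,\tau)$ BOA, and second, that its distance distribution with respect to $c=\mathbf{0}$ has the claimed form.

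For the first part, I would observe that the permutation $(0\to 1,1\to 0)$ applied to a single column is a bijection on the rows that does not alter the number of rows, so the cardinality stays $M$ and the length stays $n$. For the strength, I would fix any $\tau$-subset $T$ of columns. If $T$ does not contain the first column, the $M\times\tau$ submatrix is unchanged, so every ordered $\tau$-tuple still appears $\lambda$ times. If $T$ does contain the first column, then complementing that column simply permutes the ambient $\{0,1\}^\tau$ by swapping $0$ and $1$ in the corresponding coordinate; this is a bijection on ordered $\tau$-tuples, so each tuple still occurs $\lambda$ times. Hence $C^{1,0}$ has strength $\tau$.

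For the second part, I would split the rows of $C$ into the $2n+2$ groups indexed by weight and first coordinate: the $y_i$ rows of weight $i$ with first coordinate $0$ (for $i=0,\ldots,n-1$) and the $x_i$ rows of weight $i$ with first coordinate $1$ (for $i=1,\ldots,n$). Applying the permutation to the first column sends a row in the first group to a row of weight $i+1$, and a row in the second group to a row of weight $i-1$; the Hamming distance to $c=\mathbf{0}$ equals the weight. Collecting rows of $C^{1,0}$ at each distance $j$ then gives
\[ \widehat{w}_j = x_{j+1}+y_{j-1}, \qquad j=0,1,\ldots,n, \]
with the boundary conventions $y_{-1}=x_{n+1}=0$. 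Reading off the endpoints yields $\widehat{w}_0=x_1$ and $\widehat{w}_n=y_{n-1}$, and the inner entries match $x_{j+1}+y_{j-1}$, which is exactly the tuple in the statement.

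Finally, $\widehat{w}\in W(n,M,\tau)$ is then immediate, since $\widehat{w}$ is the distance distribution of the $(n,M,\tau)$ BOA $C^{1,0}$ with respect to the point $c\in H(n,2)$ (internal or external depending on whether $x_1\geq 1$). There is no real obstacle here; the only subtlety is the boundary terms in the formula $\widehat{w}_j=x_{j+1}+y_{j-1}$, which must be handled by the conventions $x_{n+1}=y_{-1}=0$ together with the equalities $y_0=w_0$ and $x_n=w_n$ from Theorem~\ref{th 3.1.2}.
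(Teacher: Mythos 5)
Your proof is correct and follows essentially the same route as the paper: the rows of $C$ with first coordinate $1$ (counted by $x_i$) move to distance $i-1$ from $c$ and those with first coordinate $0$ (counted by $y_i$) move to distance $i+1$, and collecting terms gives $\widehat{w}_j=x_{j+1}+y_{j-1}$ with the natural boundary conventions. Your explicit verification that $C^{1,0}$ retains strength $\tau$ is a welcome bit of extra care that the paper only asserts in passing, but it does not change the substance of the argument.
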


\begin{proof}
There are $x_i$ points in $C^{1,0}$ (coming from $C_1$) at distance $i-1$ from $c$.
Analogously, there are $y_i$ points in $C^{1,0}$ (coming from $C_0$) at distance $i+1$ from $c$.
This means that the number of the points of $C^{1,0}$ at distance $0$ from $c$ is $x_1$,
the number of the points of $C^{1,0}$ at distance $i$, $1 \leq i \leq n-1$, from $c$ is $y_{i-1}+x_{i+1}$,
and, finally, the number of the points of $C^{1,0}$ at distance $n$ from $c$ is $y_{n-1}$.
Therefore the distance distribution of $C^{1,0}$ with respect to $c$ is
$\widehat{w}=(x_1,x_2+y_0,\ldots,x_n+y_{n-2},y_{n-1})$. \hfill $\Box$
\end{proof}

\begin{cor}
\label{a6}
The pair $(w,w^\prime)$ is ruled out if
$\widehat{w} \not\in W(n,M,\tau)$ or if $\overline{\widehat{w}} \not\in W(n,M,\tau)$.
\end{cor}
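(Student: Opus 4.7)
The plan is to derive the statement by a direct contrapositive argument, combining Theorem \ref{a5} with Corollary \ref{complementary} applied to the transformed array $C^{1,0}$. Concretely, I assume for contradiction that a $(n,M,\tau)$ BOA $C$ realizing the pair $(w,w^\prime)$ actually exists, and then show that both $\widehat{w}$ and $\overline{\widehat{w}}$ must lie in $W(n,M,\tau)$; ruling out either of them in $W(n,M,\tau)$ therefore forbids the pair $(w,w^\prime)$.

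First I would recall that the pair $(w,w^\prime)$ determines, via the system \eqref{sys1_int.} in Theorem \ref{th 3.1.2}, a nonnegative integer solution $(x_i,y_i)$ describing how the rows of $C$ split according to the entry in the first column. Using this splitting, Theorem \ref{a5} is directly applicable: it computes the distance distribution of the array $C^{1,0}$ obtained by flipping the first column, and this array is again a $(n,M,\tau)$ BOA since the permutation $(0\to 1,1\to 0)$ on one coordinate preserves all pairwise Hamming distances (and clearly preserves the strength property, as it just permutes the rows of every coordinate-projection). Hence $\widehat{w}\in W(n,M,\tau)$.

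Next I would apply Corollary \ref{complementary} to the array $C^{1,0}$: since $\widehat{w}$ is the distance distribution of some $(n,M,\tau)$ BOA with respect to the point $c=\mathbf{0}$, its reversal $\overline{\widehat{w}}=(y_{n-1},x_n+y_{n-2},\ldots,x_2+y_0,x_1)$ is the distance distribution of $\overline{C^{1,0}}$ with respect to the same point, so $\overline{\widehat{w}}\in W(n,M,\tau)$ as well. Taking the contrapositive of these two necessary conditions yields the claimed elimination rule.

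The only real content beyond quoting Theorem \ref{a5} and Corollary \ref{complementary} is checking that the appeal to Corollary \ref{complementary} for $C^{1,0}$ is legitimate, i.e. that $\widehat{w}$ should be regarded as an element of $W(n,M,\tau)$ rather than just of $P$ or $Q$ separately; this is automatic because $W=P\cup Q$ and the internal/external status of $c$ with respect to $C^{1,0}$ is governed precisely by whether $x_1>0$ or $x_1=0$, a dichotomy already absorbed into the definition of $W(n,M,\tau)$. So there is no real obstacle; the proof is essentially one line combining the previous two results.
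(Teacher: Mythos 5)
Your proof is correct and follows exactly the route the paper intends: the paper states Corollary \ref{a6} without an explicit proof, but by analogy with the proofs of Corollaries \ref{a2} and \ref{a4} it is clearly meant to follow from Theorem \ref{a5} together with Corollary \ref{complementary} applied to $C^{1,0}$, which is precisely your argument.
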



\begin{cor}
\label{a7}
The distance distribution $w$ is ruled out if all possible pairs  $(w,w^\prime)$, $w^\prime \in W(n-1,M,\tau)$,
are ruled out.
\end{cor}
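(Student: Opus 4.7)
The plan is a straightforward contrapositive argument, packaging together all the sieves developed in the section. Suppose that $w$ is genuinely realized as the distance distribution (with respect to some fixed point $c \in H(n,2)$) of an $(n,M,\tau)$ BOA $C$. I would then exhibit a specific pair $(w,w^\prime)$ that survives every test, thereby contradicting the hypothesis that all such pairs are ruled out.

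To produce $w^\prime$, apply the column-deletion construction preceding Theorem \ref{th 3.1.2}: after translating so that $c = \mathbf{0}$, remove the first column of $C$ to obtain a BOA $C^\prime$ of parameters $(n-1,M,\tau)$, and let $w^\prime$ be its distance distribution with respect to $c^\prime = \mathbf{0} \in H(n-1,2)$. By definition $w^\prime \in W(n-1,M,\tau)$, so the pair $(w,w^\prime)$ is among those enumerated in the corollary. It remains to verify that this particular pair cannot have been eliminated by any of the sieves in the section.

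The key point is that each of the rules in Corollaries \ref{complementary}, \ref{algo-a0}, \ref{a2}, \ref{a4}, \ref{a6} is derived from a genuine combinatorial transformation of BOAs, and therefore cannot reject a pair that actually arises from an existing BOA. Concretely, the numbers $x_i, y_i$ defined from the splitting of $C$ by its first column automatically satisfy the linear system (\ref{sys1_int.}); the vectors $y$ and $x$ are the true distance distributions of the derived BOAs $C_0$ and $C_1$ (Theorems \ref{a1} and \ref{a3}), so both lie in $W(n-1,M/2,\tau-1)$ together with their reversals (applying Theorem \ref{t-complementary} to $\overline{C_0}$ and $\overline{C_1}$); and $\widehat{w}$ is the true distance distribution of the BOA $C^{1,0}$ obtained by flipping the first column (Theorem \ref{a5}), so it and its reversal lie in $W(n,M,\tau)$.

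Thus the pair $(w,w^\prime)$ passes every test, contradicting the assumption that every pair $(w,w^\prime)$ with $w^\prime \in W(n-1,M,\tau)$ has been ruled out. There is no real obstacle here: the only care needed is to make sure every sieve used in the algorithm corresponds to a property that a genuine BOA would satisfy, so that no true pair is ever falsely eliminated — which is exactly how Theorems \ref{th 3.1.2}, \ref{a1}, \ref{a3}, \ref{a5} and Corollary \ref{complementary} were set up.
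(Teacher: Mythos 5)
Your proposal is correct and matches the paper's reasoning: the paper states Corollary \ref{a7} without proof, treating it as an immediate consequence of the soundness of the sieves, and your contrapositive argument (a genuine $w$ yields, via column deletion, a genuine pair $(w,w^\prime)$ that no sound sieve can eliminate) is exactly the implicit justification. You also correctly identify the one point that needs care — that every elimination rule is derived from an actual combinatorial transformation and hence preserves all truly realizable distance distributions in the sets $W$.
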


Otherwise, we proceed with the remaining pairs as follows. Let
\begin{equation}
\label{solx-y}
(x_0^{(j)}=0,x_1^{(j)},\ldots,x_n^{(j)};y_0^{(j)},y_1^{(j)},\ldots,y_{n-1}^{(j)},y_n^{(j)}=0), \ \ j = 1, \ldots, s,
\end{equation}
are all solutions of Theorem {\ref{th 3.1.2}} when $w^\prime$ runs $W(n-1,M,\tau)$ which have passed the sieves of Corollaries
\ref{a2}, \ref{a4} and \ref{a6}. We now free the cutting and
thus consider all possible $n$ cuts of columns of $C$. These cuts produce pairs $(w,w^\prime)$
(where $w$ is fixed) and corresponding solutions \eqref{solx-y}. Let the solutions \eqref{solx-y}
appear with multiplicities $k_1,k_2,\ldots,k_s$, respectively.

\begin{thm}\label{th 3.1.3}
The nonnegative integers $k_1,k_2,\ldots,k_s$ satisfy the equations
\begin{equation}\label{sys2_int.}
\left|
\begin{array}{lllcll}
k_1           & + k_2           & + & \cdots & + k_s           & = n \\
k_1 x_1^{(1)} & + k_2 x_1^{(2)} & + & \cdots & + k_s x_1^{(s)} & = w_1 \\
k_1 x_2^{(1)} & + k_2 x_2^{(2)} & + & \cdots & + k_s x_2^{(s)} & = 2 w_2 \\
&&& \ddots \\
k_1 x_n^{(1)} & + k_2 x_n^{(2)} & + & \cdots & + k_s x_n^{(s)} & = n w_n \\
\end{array}
\right. .
\end{equation}
\end{thm}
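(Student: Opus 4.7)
The plan is a straightforward double-counting argument run column-by-column over $C$. For each column index $\ell \in \{1,\ldots,n\}$, apply Theorem \ref{th 3.1.2} to the cut that removes column $\ell$; by the very definition of $x_i$, the $i$-th component $x_i^{(\ell)}$ of the resulting solution equals the number of rows of the $i$-block of $C$ whose $\ell$-th coordinate is $1$. Group the $n$ columns of $C$ according to which of the $s$ solutions in \eqref{solx-y} they produce. The $j$-th class contains exactly $k_j$ columns by definition of the multiplicities, so $k_1+\cdots+k_s=n$. This is the first equation of \eqref{sys2_int.}, and it also makes clear that each $k_j$ is a nonnegative integer (as a cardinality of a set of columns).

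For the remaining $n$ equations, fix $i\in\{1,\ldots,n\}$ and count in two ways the total number of $1$s appearing in the $i$-block of $C$. On one hand, the $i$-block consists of $w_i$ rows, each of Hamming weight $i$, so the total is $iw_i$. On the other hand, summing column by column one obtains $\sum_{\ell=1}^{n} x_i^{(\ell)}$, and regrouping columns according to their class yields $\sum_{j=1}^{s} k_j x_i^{(j)}$. Equating the two expressions produces the $(i+1)$-st equation of the system.

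I do not foresee any significant obstacle: the argument is pure bookkeeping once one identifies the quantity $x_i^{(\ell)}$ coming from Theorem \ref{th 3.1.2} with the combinatorial count of ones in column $\ell$ of the $i$-block. The only minor point worth spelling out is that the double count refers to ones (hence the factor $i$ on the right-hand side), whereas performing the analogous computation with zeros would yield the symmetric identities $\sum_{j} k_j y_i^{(j)} = (n-i)w_i$; these are redundant in view of $x_i+y_i=w_i$ together with $k_1+\cdots+k_s=n$, so it suffices to record only the $x$-equations, exactly as in \eqref{sys2_int.}.
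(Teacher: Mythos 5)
Your proposal is correct and follows essentially the same route as the paper: a double count of the ones in the $i$-block, giving $iw_i$ on one side and $\sum_j k_j x_i^{(j)}$ on the other, with the first equation coming from the fact that the $n$ column cuts are partitioned among the $s$ solution classes. Your extra remarks (the explicit identification of $x_i^{(\ell)}$ and the redundancy of the $y$-equations) are sound but not needed.
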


\begin{proof}
This follows for counting in two ways the number of the ones in the $i$-block of $C$.
For fixed $i \in \{1,2,\ldots,n\}$, this number is obviously $iw_i$, and, on the other hand, it is equal
to the sum $k_1x_i^{(1)}+k_2x_i^{(2)}+\cdots+k_sx_i^{(s)}$. \hfill $\Box$
\end{proof}

\begin{cor}
\label{a8}
The distance distribution $w$ is ruled out if the system (\ref{sys2_int.}) does not have solutions.
\end{cor}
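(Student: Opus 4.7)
The plan is to argue by contrapositive: assume $w$ is realized as the distance distribution of an actual $(n,M,\tau)$ BOA $C$ with respect to some point $c$, and show that system (\ref{sys2_int.}) must then admit a nonnegative integer solution $(k_1,\ldots,k_s)$.

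First I would iterate over all $n$ columns of $C$. For each column, deleting it produces an $(n-1,M,\tau)$ BOA $C^\prime$ with some distance distribution $w^\prime \in W(n-1,M,\tau)$, together with the column-specific integers $x_i,y_i$ counting ones and zeros in the $i$-block. By Theorem \ref{th 3.1.2} the pair $(x,y)$ is a nonnegative integer solution of (\ref{sys1_int.}); moreover, applying Theorems \ref{a1}, \ref{a3}, \ref{a5} (and the corresponding Corollaries \ref{a2}, \ref{a4}, \ref{a6}) to the derived arrays $C_0$, $C_1$ and $C^{1,0}$ forces this $(x,y)$ to survive all three sieves. Hence $(x,y)$ must coincide with one of the enumerated solutions indexed by $j \in \{1,\ldots,s\}$ in (\ref{solx-y}).

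Next I would set $k_j$ to be the number of columns of $C$ whose data, after sieving, match the $j$-th solution. Since every column contributes to exactly one index, these multiplicities are nonnegative integers with $k_1+\cdots+k_s = n$; by Theorem \ref{th 3.1.3} they satisfy the remaining equations of (\ref{sys2_int.}) as well, because the right-hand side $iw_i$ counts the ones in the $i$-block of $C$ column-by-column, and each column of type $j$ contributes exactly $x_i^{(j)}$ such ones. Thus an actual BOA realizing $w$ yields a nonnegative integer solution of (\ref{sys2_int.}); conversely, if no such solution exists, $w$ cannot be realized and is ruled out.

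There is no real obstacle, as the statement is essentially the contrapositive reformulation of Theorem \ref{th 3.1.3}. The only subtle point is the well-definedness of the multiplicities $k_j$: one must be sure that the tuples produced by the $n$ columns all lie within the enumerated list (\ref{solx-y}), which is guaranteed because that list, by its definition, exhausts every solution of Theorem \ref{th 3.1.2} (ranging $w^\prime$ over $W(n-1,M,\tau)$) that passes the sieves of Corollaries \ref{a2}, \ref{a4} and \ref{a6}.
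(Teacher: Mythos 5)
Your proposal is correct and follows exactly the reasoning the paper intends: Corollary \ref{a8} is the immediate contrapositive of Theorem \ref{th 3.1.3} (whose proof is the same two-way count of ones in the $i$-blocks), combined with the observation that for an actual BOA every column cut yields a pair $(x,y)$ that survives the sieves of Corollaries \ref{a2}, \ref{a4}, \ref{a6} and hence appears in the list (\ref{solx-y}). Your explicit check that the multiplicities $k_j$ are well defined is the only point the paper leaves implicit, and you handle it correctly.
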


\begin{cor}
\label{a9}
Let $j \in \{1,2,\ldots,s\}$ be such that all solutions of the system (\ref{sys2_int.}) have $k_j=0$.
Then the pair $(w,w^\prime)$, which corresponds to $j$, is ruled out.
\end{cor}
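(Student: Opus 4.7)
The plan is to argue by contradiction, leveraging Theorem \ref{th 3.1.3} directly. Suppose, for the sake of contradiction, that the pair $(w,w^\prime)$ corresponding to index $j$ is realizable. Then there exists an $(n,M,\tau)$ BOA $C$ whose distance distribution with respect to the all-zero point is $w$, and some column of $C$ whose deletion yields a BOA $C^\prime$ of parameters $(n-1,M,\tau)$ with distance distribution $w^\prime$; by construction the corresponding $(x,y)$-vector from Theorem \ref{th 3.1.2} is the $j$-th solution in the enumeration \eqref{solx-y}.

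Next I would apply the free-cutting step preceding Theorem \ref{th 3.1.3}. Running over all $n$ columns of $C$, each column-cut produces one of the surviving pairs $(w,w^\prime)$ together with one of the solutions \eqref{solx-y}, and we record the multiplicity vector $(k_1,\ldots,k_s)$. Since at least one column, namely the one guaranteed above, produces the $j$-th solution, we have $k_j\geq 1$. By Theorem \ref{th 3.1.3}, the tuple $(k_1,\ldots,k_s)$ is a nonnegative integer solution of the system \eqref{sys2_int.}.

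On the other hand, the hypothesis of the corollary asserts that every nonnegative integer solution of \eqref{sys2_int.} has $k_j=0$. This contradicts $k_j\geq 1$, so no BOA realizing the pair $(w,w^\prime)$ corresponding to index $j$ can exist; that pair is ruled out.

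The main obstacle is conceptual rather than technical: one must be careful that the multiplicity vector $(k_1,\ldots,k_s)$ is well-defined, i.e.\ that each of the $n$ column-cuts of $C$ really does produce one of the \emph{surviving} solutions in \eqref{solx-y} and not some solution already eliminated by Corollaries \ref{a2}, \ref{a4}, \ref{a6}. This, however, is exactly the content of those corollaries applied to $C$, $C^\prime$, $C_0$, $C_1$ and $C^{1,0}$, so the argument closes immediately once the bookkeeping is set up.
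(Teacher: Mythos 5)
Your proof is correct and is exactly the argument the paper intends: the corollary is stated without proof as an immediate consequence of Theorem \ref{th 3.1.3}, and your contradiction (a realizing column-cut forces $k_j\geq 1$ in a multiplicity vector that must solve \eqref{sys2_int.}) is the natural way to make that implication explicit. The bookkeeping point you flag is handled correctly, since for a genuine array every cut yields a true pair with a uniquely determined $(x,y)$ that necessarily survives the sieves of Corollaries \ref{a2}, \ref{a4} and \ref{a6}.
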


\section{Our algorithm}
\label{sec:2}

We organize the results from the previous section to work together as follows.

All BOAs (in fact, their current sets of feasible distance distributions $P$, $Q$ and $W$) of interest
for the targeted BOA $C=(n,M,\tau)$ are collected in a table starting with first row
\[  (\tau,M,\tau) \ (\tau+1,M,\tau) \ (\tau+2,M,\tau) \ \ldots \ C=(n,M,\tau). \]
The next row consist of the derived BOAs
\[  (\tau-1,M/2,\tau-1) \ (\tau,M/2,\tau-1) \ (\tau+1,M/2,\tau-1) \ \ldots \ (n-1,M/2,\tau-1) \]
and so on until it makes sense. We apply Corollaries \ref{algo-a0}, \ref{a7} and \ref{a8} in every row separately
from left to right to reduce the sets $P$, $Q$ and $W$. Of course, this process is fueled with information from
the columns (starting from the bottom end) according to Corollaries \ref{a2}, \ref{a4}, \ref{a6} and \ref{a9}.
Every nonsymmetric distance distribution $w$ which is ruled out, forces its mirror image $\overline{w}$ to be ruled out
according to Corollary \ref{complementary}.

The algorithm stops when no new rulings out are possible. An entry at the right end, showing that some of the sets $P$, $Q$
and $W$ is empty\footnote{In fact, in all cases where we arrived at
an empty set, the other two also became empty at the same step.},
means nonexistence of the corresponding BOA.  Otherwise, we collect the reduced sets
for further analysis and classification results (in some cases, possibly, uniqueness).

Here is the pseudocode of the module of our algorithm which deals with the sets $W(n, M, \tau)$,
$W(n-1, M, \tau)$ and $W(n - 1, M/2, \tau -1)$.

\begin{algorithm}
\caption{}
\label{ndda}
\begin{algorithmic}[1]
\Procedure{NDDA}{$W(n, M, \tau)$, $W(n-1, M, \tau)$, $W(n - 1, M/2, \tau -1)$}

\State $\textit{filteredW} =$ empty set

\For { $w \in W(n, M, \tau)$}
  \State $\textit{allX} =$  empty set

  \For { $w' \in W(n-1, M, \tau)$}

	\State {$x, y =$ solve system (\ref{sys1_int.}) for integer nonnegative solutions}
	
	\If {$x,  \bar{x} \in W(n-1, M/2, \tau-1)$ and
		$y, \bar{y} \in W(n-1, M/2, \tau-1)$ and
		$\widehat{w}, \bar{\widehat{w}} \in W(n, M, \tau)$ and
		$\widehat{w}, \bar{\widehat{w}} \not \in filteredW$
	}
	  \State {add $x$ to $allX$}
	\EndIf
  \EndFor

  \If {allX is empty}
    \State {add $w$ to $filteredW$ }
  \Else
    \If {system (\ref{sys2_int.}) has no integer nonnegative solutions}
      \State {add $w$ to $filteredW$ }
    \EndIf
  \EndIf

\EndFor

\If {$filteredW$ is nonempty}
  \State \Return{\Call{NDDA}{$W(n, M, \tau) \setminus filteredW$, $W(n - 1, M,
  \tau)$, $W(n - 1, M/2, \tau - 1)$}}
\Else
  \State \Return {$W(n, M, \tau)$}
\EndIf
\EndProcedure

\end{algorithmic}
\end{algorithm}

We believe that the above description is enough for smooth reproduction of our algorithm. Anyway, we
are ready to supply the interested reader with all our programs and databases \cite{BMSres}.

\section{New nonexistence results}
\label{sec:3}
\subsection{Nonexistence of $(9,96,4)$ BOA and consequences}
\label{sec:3.1}

We apply the algorithm from the previous section on the table below
targeting the $(9,96,4)$ BOA.

\begin{center}
\begin{tabular}{cccc}
  $(4,96,4)$ & $(5,96,4)$ & $\cdots$ & $(9,96,4)$ \\
  $(3,48,3)$ & $(4,48,3)$ & $\cdots$ & $(8,48,3)$ \\
  $(2,24,2)$ & $(3,24,2)$ & $\cdots$ & $(7,24,2)$ \\
  \end{tabular}
\end{center}

The frame of the implementation is showed in the next table. In every
entry we first show the number of distance distributions in the beginning and then (after the arrow)
the number of the remaining distance distributions in the end of the implementation. The numbers in
the brackets show how many distance distributions were left possible after \cite{BKMS}.

{\footnotesize
\begin{center}
\begin{tabular}{ccccccc}
  $P(n,96,4)$ & $1 \to 1$ & $6 \to 6$ & $12 \to 12$ & $20(10) \to 10$ & $34(12) \to 9$  & $37(10) \to 0$  \\
  $Q(n,96,4)$ & $0 \to 0$ & $1 \to 1$ &  $4 \to 4$  &     $12 \to 6$  &     $41 \to 11$ &     $97 \to 0$  \\
  $W(n,96,4)$ & $1 \to 1$ & $7 \to 7$ & $16 \to 16$ &     $32 \to 16$ &     $75 \to 20$ &    $134 \to 0$  \\
\hline
  $P(n,48,3)$ & $1 \to 1$ & $6 \to 6$ & $13 \to 13$ & $31(25) \to 25$ & $53(41) \to 38$ & $96(65) \to 62$  \\
  $Q(n,48,3)$ & $0 \to 0$ & $1 \to 1$ &  $4 \to 4$  &     $13 \to 9$  &     $41 \to 30$ &    $110 \to 85$  \\
  $W(n,48,3)$ & $1 \to 1$ & $7 \to 7$ & $17 \to 17$ &     $44 \to 34$ &     $94 \to 68$ &    $206 \to 147$ \\
\hline
  $P(n,24,2)$ & $1 \to 1$ & $6 \to 6$ & $13 \to 13$ & $30 \to 28$ &  $49 \to 47$ &  $74 \to 69$  \\
  $Q(n,24,2)$ & $0 \to 0$ & $1 \to 1$ &  $5 \to  5$ & $19 \to 17$ &  $54 \to 52$ & $130 \to 125$ \\
  $W(n,24,2)$ & $1 \to 1$ & $7 \to 7$ & $18 \to 18$ & $49 \to 45$ & $103 \to 99$ & $204 \to 194$ \\
  \end{tabular}
\end{center}
}

\medskip

\begin{thm}\label{th 4.1.1}
There exist no binary orthogonal arrays of parameters $(9,96,4)$.
\end{thm}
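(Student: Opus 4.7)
The plan is to invoke Algorithm \ref{ndda} on the table displayed in Section \ref{sec:3.1} and track the resulting chain of eliminations until $W(9,96,4)$ collapses to the empty set. First I would build the initial sets $P$, $Q$, $W$ for each of the eighteen triples in the table by enumerating all nonnegative integer solutions $w$ of the linear system \eqref{eq in} (for the appropriate $n$, $M$, $\tau$), separating those with $w_0\geq 1$ from those with $w_0=0$. This is a routine finite enumeration and gives the left-hand counts of each entry in the table.

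Next I would propagate information upward along the columns, starting from the bottom row $(n,24,2)$. For each triple I apply Corollary \ref{complementary} to pair each nonsymmetric $w$ with $\overline{w}$ and discard any pair that is not simultaneously feasible. Then, moving along a row from left to right, I apply Corollary \ref{algo-a0}: for each $w\in W(n,M,\tau)$ and each candidate $w'\in W(n-1,M,\tau)$ I solve the integrality system \eqref{sys1_int.} for the splitting vectors $(x_i),(y_i)$, and I test the candidate pair against Corollaries \ref{a2}, \ref{a4}, and \ref{a6} by checking that $x,\bar x,y,\bar y\in W(n-1,M/2,\tau-1)$ and $\widehat w,\overline{\widehat w}\in W(n,M,\tau)$ using the most up-to-date versions of those sets. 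Any $w$ all of whose partners are eliminated is removed (Corollary \ref{a7}); for the survivors I then assemble the multiplicity system \eqref{sys2_int.} over the $n$ columns of $C$ and discard $w$ if it has no nonnegative integer solution (Corollary \ref{a8}), and discard individual pairs forced to have $k_j=0$ (Corollary \ref{a9}).

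Because eliminations in one cell feed back into the feasibility tests of cells above and to the right, I iterate the whole table sweep until a fixed point is reached. The expected outcome, recorded in the last table of Section \ref{sec:3.1}, is precisely the right-hand numbers after each arrow: the set $W(9,96,4)$, starting at $134$ candidate distributions, shrinks to $0$, which immediately gives the nonexistence of a $(9,96,4)$ BOA. The main obstacle is not conceptual but bookkeeping and combinatorial explosion: the column-splitting check for each pair $(w,w')$ requires solving \eqref{sys1_int.} and then cross-referencing the derived vectors against two other sets that are themselves being updated, so care is needed to re-run the sweeps until stability and to ensure that the order of applying Corollaries \ref{a2}, \ref{a4}, \ref{a6}, and \ref{a8} does not miss eliminations; this is precisely what the recursive call at the end of Algorithm \ref{ndda} guarantees. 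Once the algorithm terminates with $W(9,96,4)=\emptyset$, the theorem follows, since by Theorem \ref{a1} (or Theorem \ref{a3}) any hypothetical $(9,96,4)$ BOA would produce a distance distribution in that set.
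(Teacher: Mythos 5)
Your proposal matches the paper's proof: the paper likewise establishes nonexistence by running the algorithm of Section \ref{sec:2} on the $(9,96,4)$ table and observing that $W(9,96,4)$ (together with $P$ and $Q$) is reduced from $134$ candidates to the empty set, which is exactly the computation you describe. The only difference is that the paper states this as a one-line appeal to the zero entries in the upper-right cells of its table, while you spell out the sweep and fixed-point iteration in detail; the substance is identical.
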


\begin{proof}
The zero entries in the right upper cells of the last table imply that there exists
no binary orthogonal array of parameters $(9,96,4)$. \hfill $\Box$
\end{proof}

\medskip

The implementation of the algorithm for Theorem \ref{th 4.1.1} created database which
is available at \cite{BMSres}. Note that intermediate results are also included.

In 1966 Seiden and Zemach \cite{SZ} (see also \cite[Theorem 2.24]{HSS}) proved that
BOAs of parameters $(n,N,\tau=2k)$ and $(n+1,2N,\tau+1=2k+1)$ coexist. Therefore we have
the following nonexistence result as well.

\medskip

\begin{cor}\label{cor 4.1.2}
There exist no binary orthogonal arrays of parameters $(10,192,5)$.
\end{cor}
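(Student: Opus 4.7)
The plan is to argue by contradiction using the Seiden--Zemach coexistence theorem, which is cited in the paragraph preceding the corollary. First I would suppose that a binary orthogonal array $C$ of parameters $(10,192,5)$ exists, and then match the parameters to the statement of the Seiden--Zemach result: in their notation $(n,N,\tau=2k)$ and $(n+1,2N,\tau+1=2k+1)$ coexist, so setting $n+1=10$, $2N=192$, and $2k+1=5$ yields $n=9$, $N=96$, and $\tau=2k=4$. These are exactly the parameters of the BOA ruled out in Theorem \ref{th 4.1.1}.

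Next I would apply the coexistence result: since a $(10,192,5)$ BOA exists by assumption, the theorem guarantees the existence of a $(9,96,4)$ BOA as well. But Theorem \ref{th 4.1.1} asserts precisely the nonexistence of a $(9,96,4)$ BOA, producing an immediate contradiction. Hence no $(10,192,5)$ BOA can exist.

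There is essentially no technical obstacle to overcome here, since all the heavy lifting was done in Section \ref{sec:3.1}: the nonexistence of $(9,96,4)$ BOAs was established through the elaborate sieving algorithm applied to the three-row table of derived arrays. The only thing to be careful about is to check that the parity and arithmetic conditions of the Seiden--Zemach theorem apply cleanly to our parameters — and they do, since $4=2\cdot 2$ is even and $5=2\cdot 2+1$, so with $k=2$ the pair $(n,N,\tau)=(9,96,4)$ and $(n+1,2N,\tau+1)=(10,192,5)$ fit the hypothesis exactly. Thus the proof is a one-line deduction from Theorem \ref{th 4.1.1} combined with the cited result of Seiden and Zemach \cite{SZ}.
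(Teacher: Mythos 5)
Your proof is correct and matches the paper's own argument exactly: the corollary is deduced from Theorem \ref{th 4.1.1} via the Seiden--Zemach coexistence of $(n,N,2k)$ and $(n+1,2N,2k+1)$ BOAs, with $k=2$, $n=9$, $N=96$. The paper additionally notes that the result can be independently confirmed by running the algorithm with $(10,192,5)$ as the target, but that is a remark, not the proof.
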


\medskip

The last Corollary follows also from the implementation of our algorithm with $(10,192,5)$ as target.
This is illustrated in the next table.

{\footnotesize
\begin{center}
\begin{tabular}{ccccccc}
  $P(n,192,5)$ & $1 \to 1$ & $6 \to 6$ & $12 \to 12$ & $21(8) \to 8$  & $35(6) \to 4$ & $35(4) \to 0$  \\
  $Q(n,192,5)$ & $0 \to 0$ & $1 \to 1$ &  $4 \to 4$  &    $12 \to 4$  &    $32 \to 4$ &    $85 \to 0$  \\
  $W(n,192,5)$ & $1 \to 1$ & $7 \to 7$ & $16 \to 16$ &    $33 \to 12$ &    $67 \to 8$ &   $120 \to 0$  \\
 \end{tabular}
\end{center}
}

The nonexistence results of Theorem \ref{th 4.1.1} and Corollary \ref{cor 4.1.2} give improvements in two entries of
Table 12.1 from \cite{HSS}. We have $7 \leq L(n,\tau) \leq 8$ instead of $6 \leq L(n,\tau) \leq 8$ for the
pairs $(n,\tau)=(9,4)$ and $(10,5)$.

\subsection{Nonexistence of $(10,112,4)$ BOA and consequences}
\label{subsec:3.2}

Here we work in the table with $C=(11,112,4)$ as target.

\begin{center}
\begin{tabular}{cccc}
  $(4,112,4)$ & $(5,112,4)$ & $\cdots$ & $(11,112,4)$  \\
  $(3,56,3)$  & $(4,56,3)$  & $\cdots$ & $(10,56,3)$   \\
  $(2,28,2)$  & $(3,28,2)$  & $\cdots$ & $(9,28,2)$    \\
\end{tabular}
\end{center}

The results are shown below. Again, in every
entry we first show the number of distance distributions in the beginning and then (after the arrow)
the number of the remaining distance distributions in the end of the implementation.
The numbers in the brackets show how many distance distributions were left possible after the implementation
of the algorithm from \cite{BKMS}.

{\footnotesize
\begin{center}
\begin{tabular}{cccccc}
 $P(n,112,4)$ & $1 \to 1$ & $7 \to 7$ & $15(13) \to 13$ & $31(20) \to 12$ & $58(25) \to 16$ \\
 $Q(n,112,4)$ & $0 \to 0$ & $1 \to 1$ &      $5 \to 3$  &     $17 \to 6$  &     $59 \to 18$ \\
 $W(n,112,4)$ & $1 \to 1$ & $8 \to 8$ &     $20 \to 16$ &     $48 \to 18$ &    $117 \to 34$ \\
\hline
  $P(n,56,3)$ & $1 \to 1$ & $7 \to 7$ & $17(16) \to 16$ & $49(40) \to 40$ & $95(71) \to 68$  \\
  $Q(n,56,3)$ & $0 \to 0$ & $1 \to 1$ &      $4 \to 3$  &     $15 \to 14$ &     $59 \to 44$  \\
  $W(n,56,3)$ & $1 \to 1$ & $8 \to 8$ &     $21 \to 19$ &     $64 \to 54$ &    $154 \to 112$ \\
\hline
  $P(n,28,2)$ & $1 \to 1$ & $7 \to 7$ & $17 \to 17$ & $46(43) \to 43$ & $87(82) \to 82$  \\
  $Q(n,28,2)$ & $0 \to 0$ & $1 \to 1$ &  $5 \to  4$ &     $23 \to 22$ &     $79 \to 76$  \\
  $W(n,28,2)$ & $1 \to 1$ & $8 \to 8$ & $22 \to 21$ &     $69 \to 65$ &    $166 \to 158$ \\
  \end{tabular}
\end{center}
}
{\footnotesize
\begin{center}
\begin{tabular}{ccc}
 $72(28) \to 9$  & $88(31) \to 0$  \\
 $158 \to 24$ &    $373 \to 0$   \\
 $230 \to 33$ &    $461 \to 0$   \\
\hline
 $199(137) \to 135$ & $311(205) \to 193$  \\
 $181 \to 129$ &      $451 \to 313$  \\
 $380 \to 264$ &      $762 \to 506$  \\
\hline
 $145(137) \to 137$ & $208(196) \to 196$  \\
 $205 \to 195$ &      $469 \to 450$  \\
 $350 \to 332$ &      $677 \to 646$  \\
  \end{tabular}
\end{center}
}

The zero entries in the upper right corner imply the nonexistence of BOAs
of parameters $(10,112,4)$.

\medskip

\begin{thm}\label{th 5.1.1}
There exist no binary orthogonal arrays of parameters $(10,112,4)$ and $(11,112,4)$.
\end{thm}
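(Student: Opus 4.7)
The plan is to apply the algorithm of Section \ref{sec:2} to the three-row table displayed above targeting $C=(11,112,4)$, and conclude from the emptying of $W(10,112,4)$ and $W(11,112,4)$ that neither BOA exists. The three rows list the strength-$4$ arrays $(n,112,4)$ for $4\le n\le 11$, their derived strength-$3$ arrays $(n-1,56,3)$, and the doubly derived strength-$2$ arrays $(n-2,28,2)$, linked vertically by Theorems \ref{a1}, \ref{a3} and \ref{a5}.

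First I would initialize each cell by solving the linear system \eqref{eq in} to obtain the initial candidate sets $P$, $Q$, $W$ of distance distributions (with $w_0\ge 1$ for $P$ and $w_0=0$ for $Q$). Corollary \ref{complementary} then pairs off non-symmetric candidates so that any subsequent elimination of $w$ forces elimination of $\overline{w}$. Within each row I would sweep from left to right, using Corollary \ref{algo-a0} to discard $w$ admitting no compatible $w'$ in the preceding column, and using Corollaries \ref{a7} and \ref{a8} on the system \eqref{sys2_int.} once all feasible pairs have been enumerated. Information flows vertically through Corollaries \ref{a2}, \ref{a4}, \ref{a6} and \ref{a9}: each candidate pair $(w,w')$ yields derived distributions $x$, $y$, $\widehat{w}$ and $\overline{\widehat{w}}$, and the pair is ruled out unless all four lifts survive in the corresponding $W$ of the row below.

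Iterating these sweeps until no further change occurs produces the counts tabulated in the excerpt. The entries $P=Q=W=\emptyset$ in the last two columns of the top row are precisely the statements that no $(10,112,4)$ BOA and no $(11,112,4)$ BOA admits any feasible distance distribution, which is impossible for a genuine array; hence neither exists. Note that the $(11,112,4)$ case also follows automatically from the $(10,112,4)$ case by deleting any column, so the full theorem is already contained in the $n=10$ collapse, with the $n=11$ column providing an independent verification through the algorithm.

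The main obstacle is entirely computational. The derived rows carry hundreds of candidates, for instance $|W(10,56,3)|=762$ and $|W(9,28,2)|=677$ in the initial stage, and each propagation step must enumerate integer nonnegative solutions of the linear system \eqref{sys1_int.} for every candidate pair, verify the survival of all four vertical lifts, and then test feasibility of \eqref{sys2_int.}. Correctness of the nonexistence claim relies on the fact that every sieve of Section \ref{sec:1} is a necessary condition on a genuine BOA, so the simultaneous emptying of $P$, $Q$ and $W$ for $(10,112,4)$ after finitely many iterations is itself the certificate of nonexistence, and Theorem \ref{th 5.1.1} follows by inspecting the output tables.
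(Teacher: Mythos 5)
Your proposal matches the paper's proof: the nonexistence of $(10,112,4)$ is read off from the zero entry produced by the algorithm, and the $(11,112,4)$ case is deduced by column deletion (the paper's proof states this same implication, though with an apparent typo writing $(11,224,4)$ for $(11,112,4)$). Your clarification that the whole theorem already follows from the $n=10$ collapse is exactly the paper's argument, so the approaches are essentially identical.
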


\begin{proof}
The zero entry in the right upper cell of the last table means that there exists
no binary orthogonal array of parameters $(10,112,4)$. This immediately implies the nonexistence
of BOAs of parameters $(11,224,4)$. \hfill $\Box$
\end{proof}

\medskip

The data from the implementation of the algorithm for Theorem \ref{th 5.1.1}
is available at \cite{BMSres} with intermediate results included.

As above, we use the coexistence of $(n,N,2k)$ and $(n+1,2N,2k+1)$ BOAs to obtain further nonexistence
results.

\medskip

\begin{cor}\label{cor 5.1.2}
There exist no binary orthogonal arrays of parameters $(11,224,5)$ and $(12,224,5)$.
\end{cor}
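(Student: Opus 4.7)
The plan is to reduce both statements to Theorem \ref{th 5.1.1} by invoking the Seiden--Zemach coexistence theorem \cite{SZ} (cf.\ \cite[Theorem 2.24]{HSS}) already used in the proof of Corollary \ref{cor 4.1.2}. That theorem pairs a BOA of parameters $(n, N, 2k)$ with a BOA of parameters $(n+1, 2N, 2k+1)$, so nonexistence of one member of such a pair forces nonexistence of the other.

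First I would match each target to its even-strength partner. For $(11, 224, 5)$, the equations $n+1 = 11$, $2N = 224$, $2k+1 = 5$ give $n = 10$, $N = 112$, $k = 2$, so the partner is the $(10,112,4)$ BOA. For $(12, 224, 5)$, the same equations with $n+1 = 12$ give $n = 11$, $N = 112$, $k = 2$, so the partner is the $(11,112,4)$ BOA. These are precisely the two parameter sets ruled out in Theorem \ref{th 5.1.1}.

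Second, I would simply quote Theorem \ref{th 5.1.1} together with the coexistence above: since neither a $(10,112,4)$ nor a $(11,112,4)$ BOA exists, neither a $(11,224,5)$ nor a $(12,224,5)$ BOA exists.

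I do not expect any genuine obstacle; the deduction is a direct translation via a classical coexistence result and is entirely parallel to the derivation of Corollary \ref{cor 4.1.2} from Theorem \ref{th 4.1.1}. The only point requiring care is matching the arithmetic of the Seiden--Zemach pairing correctly for both targets, as verified above. As a sanity check one could, in principle, rerun the algorithm of Section \ref{sec:2} directly with $C = (11,224,5)$ or $C = (12,224,5)$ at the top right of the table (as was done for $(10,192,5)$ in Subsection \ref{sec:3.1}) and expect the same zero entries to appear; but the coexistence route is immediate and avoids recomputing information already encoded in Theorem \ref{th 5.1.1}.
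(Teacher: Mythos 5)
Your proposal is correct and follows exactly the paper's route: the paper derives Corollary \ref{cor 5.1.2} from Theorem \ref{th 5.1.1} via the Seiden--Zemach coexistence of $(n,N,2k)$ and $(n+1,2N,2k+1)$ BOAs, with the same pairings $(11,224,5)\leftrightarrow(10,112,4)$ and $(12,224,5)\leftrightarrow(11,112,4)$, and it likewise mentions the direct algorithmic run on $(11,224,5)$ only as independent confirmation.
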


\medskip

The last Corollary follows also from the implementation of our algorithm with $(11,224,5)$ as target.
This is illustrated in the next table, where the first two columns are missed.

{\footnotesize
\begin{center}
\begin{tabular}{cccccc}
 $P(n,224,5)$ & $15(11) \to 11$ & $32(19) \to 4$ & $63(15) \to 5$ & $74(11) \to 2$ & $108(6) \to 0$  \\
 $Q(n,224,5)$ &      $4 \to 2$  &     $16 \to 2$ &     $47 \to 4$ &    $141 \to 4$ &    $337 \to 0$  \\
 $W(n,224,5)$ &     $19 \to 13$ &     $48 \to 6$ &    $110 \to 9$ &    $215 \to 6$ &    $445 \to 0$  \\
 \end{tabular}
\end{center}
}

The nonexistence results of Theorem \ref{th 5.1.1} and Corollary \ref{cor 5.1.2} give improvements in four entries of
Table 12.1 from \cite{HSS}. We have $L(n,\tau) = 8$ instead of $7 \leq L(n,\tau) \leq 8$ for the
pairs $(n,\tau)=(10,4)$, $(11,4)$, $(11,5)$ and $(12,5)$.

\subsection{Other nonexistence results}
\label{subsec:3.3}

Our algorithm gives other nonexistence results which however are superseded by
the result of Khalyavin \cite{Khal} from 2010. We list these in the next assertion.

\medskip

\begin{thm}
\label{th 6.1.1} (\cite{Khal} and our algorithm)
There exist no binary orthogonal arrays of parameters $(10,7.2^{6}=448,6)$, $(11,7.2^{7}=896,7)$,
$(12,10.2^{8}=2560,8)$, $(13,10.2^{9}=5120,9)$, $(12,11.2^{8}$ $=2816,8)$, $(13,11.2^{9}=5632,9)$ and $(15,13.2^{10}=13312,10)$.
\end{thm}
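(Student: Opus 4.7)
The plan is to treat these nonexistence claims as further applications of Algorithm \ref{ndda} (the module NDDA) exactly in the style of Sections \ref{sec:3.1} and \ref{subsec:3.2}, and to cut down the work by invoking the Seiden--Zemach coexistence of $(n,N,2k)$ and $(n+1,2N,2k+1)$ BOAs. Since
\[
(11,896,7)=(10+1,2\cdot 448,6+1),\ \ (13,5120,9)=(12+1,2\cdot 2560,8+1),
\]
\[
(13,5632,9)=(12+1,2\cdot 2816,8+1),
\]
it suffices to establish nonexistence in the four even-strength cases $(10,448,6)$, $(12,2560,8)$, $(12,2816,8)$, and $(15,13312,10)$; the three odd-strength cases then follow by the same reasoning used to derive Corollary \ref{cor 4.1.2} from Theorem \ref{th 4.1.1}. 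The remaining case $(15,13312,10)$ stands alone.

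For each even-strength target $C=(n,M,\tau)$ I would build the triangular table whose first row is
\[
(\tau,M,\tau)\ (\tau+1,M,\tau)\ \cdots\ (n,M,\tau),
\]
whose next row is the family of derived $(n',M/2,\tau-1)$ BOAs obtained by the construction of Figure \ref{construction A}, and so on down through halvings of the cardinality and decrements of the strength until the row becomes trivial. The initial sets $P,Q,W$ are produced from the linear system \eqref{eq in} (together with $w_0\ge 1$, $w_0=0$ respectively) exactly as in \cite{BK1}. Then NDDA is iterated across the table: within a row, Corollaries \ref{algo-a0}, \ref{a7}, \ref{a8} filter distance distributions from left to right; between rows, Corollaries \ref{a2}, \ref{a4}, \ref{a6}, \ref{a9} feed information upward from the bottom; and Corollary \ref{complementary} is applied everywhere to force symmetric elimination under $w\mapsto \overline{w}$. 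The algorithm halts when no further rulings-out occur, and nonexistence is declared if the $(n,M,\tau)$-entry of any of $P$, $Q$, $W$ is emptied (in practice, as footnoted in Section \ref{sec:2}, all three become empty simultaneously).

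The main obstacle is purely computational: as $\tau$ and $n$ grow, the starting sets $W(n',M',\tau')$ become very large --- the row for $(n,M/2^{\tau-1},1)$ alone already involves a long list of admissible distributions --- and the bilinear sieves (\ref{sys1_int.}) and (\ref{sys2_int.}) must be solved many times for every candidate $w$. The $(15,13312,10)$ case, with $\tau=10$ and $n=15$, is the heaviest; here I would exploit the very high strength (which forces many $b_k$ constraints) and aggressive use of Corollary \ref{complementary} to keep the active sets manageable, and would process the rows from the bottom up so that elimination in $W(n-1,M/2,\tau-1)$ propagates quickly through Corollaries \ref{a2} and \ref{a4}.

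Once the algorithm (or the direct argument of Khalyavin \cite{Khal}) certifies emptiness of the four target sets, the three odd-strength statements follow immediately: a hypothetical $(n+1,2N,2k+1)$ BOA would, by Seiden--Zemach \cite{SZ} (see \cite[Theorem 2.24]{HSS}), coexist with an $(n,N,2k)$ BOA, contradicting the even-strength nonexistence. The databases and intermediate tables, produced exactly as in the cases of Theorem \ref{th 4.1.1} and Theorem \ref{th 5.1.1}, would be deposited at \cite{BMSres}.
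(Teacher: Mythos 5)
Your proposal matches the paper's approach: the paper gives no written proof of this theorem beyond citing Khalyavin's result and asserting that the same NDDA machinery used for Theorems \ref{th 4.1.1} and \ref{th 5.1.1} (together with the Seiden--Zemach coexistence for the odd-strength cases) independently confirms these parameters, which is exactly the plan you describe, with the parameter pairings $(10,448,6)\leftrightarrow(11,896,7)$, $(12,2560,8)\leftrightarrow(13,5120,9)$, $(12,2816,8)\leftrightarrow(13,5632,9)$ checked correctly. The only caveat is that, like the paper, your argument ultimately rests on an unexhibited computation (or on Khalyavin's external result) rather than on details presented in the text.
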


\medskip

\section{Updated table for $L(n,\tau)$}
\label{sec:4.}

We present an updated version of the situation with the possible values of function
$L(n,\tau)$ -- the minimum possible index $\lambda$ of an $(n,M=\lambda 2^\tau,\tau)$
binary orthogonal array. Our table covers the range $4 \leq n \leq 16$ and $4 \leq \tau \leq 10$ (see Table 1).

\begin{table}
\begin{center}
\caption{(see Table 12.1 in \cite{HSS})
Minimum possible index $\lambda$ of binary orthogonal array of length $n$, $4 \leq n \leq 16$,
and strength $\tau$, $4 \leq \tau \leq 10$.}
\label{tab:1}
\smallskip

\begin{tabular}{|c|c|c|c|c|c|c|c|}
      \hline\noalign{\smallskip}
      $n$ / $\tau$ & 4 & 5 & 6 & 7 & 8 & 9 & 10 \\ \hline\noalign{\smallskip}
      4 & 1 &   &   & & & & \\ \hline\noalign{\smallskip}
      5 & 1 & 1 &   & & & & \\ \hline\noalign{\smallskip}
      6 & 2 & 1 & 1 & & & & \\ \hline\noalign{\smallskip}
      7 & $^{sz}4$     & 2 & 1 & 1 &   & & \\ \hline\noalign{\smallskip}
      8 & $4^{c}$        & $^{sz}4$     & 2 & 1 & 1 & & \\ \hline\noalign{\smallskip}
      9 & $^{bms}$7-8    & $4^{c}$        & 4 & 2 & 1 & 1 & \\ \hline\noalign{\smallskip}
     10 & $^{bms}8$      & $^{bms}$7-8    & $^{kh}8$  & 4 & 2 & 1 & 1 \\ \hline\noalign{\smallskip}
     11 & $^{bms}8$      & $^{bms}8$      & $8^c$     & $^{kh}8$  & 4 & 2 & 1 \\ \hline\noalign{\smallskip}
     12 & $^{bkms}8$     & $^{bms}8$      & 12-16     & $8^c$     & $^{kh}8$  & 4 & 2 \\ \hline\noalign{\smallskip}
     13 & 8              & $^{bkms}8$     & 16        & 12-16     & $^{kh}16$ & $^{kh}8$  & 4 \\ \hline\noalign{\smallskip}
     14 & 8              & 8              & 16        & 16        & $16^c$    & $^{kh}16$ & $^{kh}8$ \\ \hline\noalign{\smallskip}
     15 & $8^{\tiny nr}$ & 8              & $16^{rh}$ & 16        & 26-32     & $16^c$    & $^{kh}16$ \\ \hline\noalign{\smallskip}
     16 & 10-16          & $8^{\tiny nr}$ & 21-32     & $16^{rh}$ & 39-64     & 26-32     & $^{kh}32$ \\
     \hline\noalign{\smallskip}
\end{tabular}

\smallskip

{\bf Key:}

\begin{tabular}{ll}
$bkms$ & Boyvalenkov, Kulina, Marinova, Stoyanova in \cite{BKMS} \\
$c$  & Cyclic code \\
$kh$ & Khalyavin (2010) (and see also Section 6 and \cite[Section 4.3]{BKMS} and ) \\
$nr$ & Nordstrom-Robinson (1967) code \\
$rh$ & Rao-Hamming construction \\
$sz$ & Seiden and Zemach (1966) bound \\
$bms$ & Theorem \ref{th 4.1.1} and Corollary \ref{cor 4.1.2},
or Theorem \ref{th 5.1.1} and Corollary \ref{cor 5.1.2}
\end{tabular}
\end{center}
\end{table}

All calculations in this paper were performed by programs in Maple.
All results (in particular all possible distance distributions in the beginning)
can be seen at \cite{BMSres}. All programs are available upon request.




\end{document}